\title{Explicit method to make shortened stabilizer EAQECC from stabilizer QECC}
\author{%
Daiki Ueno and Ryutaroh Matsumoto\\
Department of Information and Communications Engineering\\
Tokyo Institute of Technology, 152-8550 Japan}
\date{27 May 2022}
\newtheorem{Thm}{Theorem}      
\newtheorem{Example}{Example}      
\newtheorem{Lem}{Lemma}[Thm]    
\newtheorem{Cor}{Corollary}[Lem] 
\newtheorem{remark}{Remark}    
\newcommand{\QED}{\hspace*{\fill}\rule{1eM}{1eM}}
\begin{document}
\maketitle
\begin{abstract}
  In the previous research by Grassl, Huber and Winter, 
  they proved a theorem which can make 
  entanglement-assisted quantum error-correcting codes
  (EAQECC) from general quantum error-correcting codes (QECC).
  In this paper, we prove that the shortened EAQECC is a stabilizer code 
  if the original EAQECC is a stabilizer code.
\end{abstract}

\section{Introduction}

Quantum error correction is an important tool
for realizing large-scale quantum computers
\cite{chuangnielsen}.
Stabilizer codes
\cite{gottesman96,calderbank97,calderbank98}
is a large class of quantum error-correcting codes
(QECC), which allows efficient encoding of quantum information
into quantum codewords, and relatively efficient decoding.
On the other hand,
the most general class of QECCs \cite{devetak05}
allows neither efficient encoding nor decoding.

The stabilizer code is defined as a common eigenspace
of mutually commuting complex unitary matrices (collectively called as stabilizer),
which is a bit
difficult to handle (for conventional coding theorists working
over finite fields). Calderbank et~al.\ \cite{calderbank97,calderbank98}
showed that most aspects of stabilizer codes can be studied
through linear subspaces over finite fields that are self-orthogonal
with respect to the symplectic inner product \cite{grove02}.
The self-orthogonality corresponds to the commutativity
of unitary matrices, but construction and study of self-orthogonal linear spaces
remain more difficult than those of the conventional linear codes
without orthogonality requirements.

Later Brun et~al.\ \cite{brun06} made a break-through that
enables construction of QECC from any linear spaces over finite fields,
and shorten the code length by utilizing preshared entanglement
between an encoder and a decoder while
keeping the number of information symbols and the error correcting capability.
It is called entanglement-assisted quantum error-correcting codes (EAQECCs).
Brun et al.'s proposal was for qubit (binary case), and
EAQECC was generalized to qudits ($q$-ary case) as well \cite{wilde08,luo17,nadkarni21b}.
Galindo et al.\ \cite{galindo19} provided a description of $q$-ary EAQECCs by
linear codes over finite fields.
Those cited studies concerned with EAQECC constructed from a stabilizer.
Recently, Grassl et al.\ showed the most
general framework of EAQECC \cite{grassl22},
and a method of constructing an EAQECC from any entanglement-unassisted
QECC in Theorem 7 of \cite{grassl22}.
A feature of \cite[Theorem 7]{grassl22} is that it is unclear whether or not
constructed EAQECC is based on stabilizer, even when
the original QECC is stabilizer-based.
There is an advantage of an EAQECC being stabilizer-based,
for example, efficient encoding \cite{nadkarni21a} and
decoding \cite{nadkarni21b} procedures are known
for stabilizer-based EAQECCs.
To this direction,
  Galindo et~al.\ \cite[Proposition 5]{galindo19correction} showed a construction of
  EAQECCs from stabilizer codes, and resulting EAQECCs are also stabilizer-based.
  Our contribution relative to \cite{galindo19correction} is that
  our theorem has a weaker assumption than \cite{galindo19correction}
  (see Remark \ref{remrem}), and it enables us to construct a wider set of
  EAQECCs from the same stabilizer code than \cite{galindo19correction}.

We will show that the resulting EAQECC is stabilizer-based
if the original QECC is stabilizer-based.
Our tools are the conventional puncturing and shortening
of linear codes \cite{pless98},
tailored for self-orthogonal linear spaces with respect to the
symplectic inner product.

This letter is organized as follows: 
First, in Section 2, we explain the necessary assumptions and definitions for the proof, 
and then in Section 3, we prove some lemmas necessary for main theorem and show its proof.
Concluding remarks are given in Section 4.

\section{Preparation}
$F_q$ denotes a finite field of order $q$.
For two vectors $\vec{a}$,$\vec{b}\in F^{n}_q$, the Euclidean inner product is defined by
\[\langle \vec{a},\vec{b} \rangle _E=a_1b_1+\dotsb +a_nb_n .\]
($\vec{a}|\vec{b}$)$\in F^{2n}_q$ denotes the vector concatenating $\vec{a}$ and $\vec{b}\in F^{n}_q$.
For two vectors ($\vec{a}|\vec{b}$),($\vec{c}|\vec{d}$)$\in F^{2n}_q$, the symplectic inner product is defined by
\[\langle (\vec{a}|\vec{b}),(\vec{c}|\vec{d}) \rangle _s = \langle \vec{a},\vec{d} \rangle _E - \langle \vec{b},\vec{c} \rangle _E .\]
For a linear code $V\subset F^{2n}_q$, its symplectic dual code is defined by
$V^{\perp s}=\{ (\vec{a}|\vec{b})\in F^{2n}_q \mid$ for all $(\vec{c}|\vec{d})\in V$,  
$\langle (\vec{a}|\vec{b}),(\vec{c}|\vec{d}) \rangle _s = 0 \}$.
In addition, for ($\vec{a}|\vec{b}$)$\in F^{2n}_q$, the symplectic weight is defined by
\[w(\vec{a}|\vec{b})=\sharp \{i\mid(a_i,b_i)\neq (0,0)\} ,\]
where $\sharp$ denotes the number of elements in a set.

$Puncturing$ in this paper refers to making a new linear code $C'\subset F^{2n-2}_q$ from a linear code 
$C\subset F^{2n}_q$ by eliminating the $i$th and the ($n+i$)th components ($1 \leq i \leq n$) of all vectors in $C$.
$Shortening$ in this paper refers to making a new linear code $C'\subset F^{2n-2}_q$ from a linear code 
$C\subset F^{2n}_q$ by selecting vectors in $C$ where the $i$th and the ($n+i$)th components ($1 \leq i \leq n$) are both zero
and then eliminating the $i$th and the ($n+i$)th components of the selected vectors.

An $[n,k,d;c]_q$ stabilizer EAQECC is a linear code $C\subset F^{2n}_q$ with $c=(\dim C-\dim C\cap C^{\perp s})/2$, $k=c+n-\dim C$ and 
$d=\min \{w(\vec{x})\mid\vec{x}\in C^{\perp s}\setminus C\}$.
An $[n,k,d;0]_q$ stabilizer QECC is a linear code $C\subset F^{2n}_q$ with $0=(\dim C-\dim C\cap C^{\perp s})/2$, $k=n-\dim C$ and 
$d=\min \{w(\vec{x})\mid\vec{x}\in C^{\perp s}\setminus \{\vec{0}\}\}$.

\section{Main Theorem and Proof}
\begin{Thm}\label{Thm1}
%
For a linear code $C\subset F^{2n}_q$ with $C\subset C^{\perp s}$, $k=n-\dim C$, $d=\min \{w(\vec{x})\mid\vec{x}\in C^{\perp s}\setminus \{\vec{0}\} \}$
and for all natural numbers $\ell$ satisfying $1 \le \ell \le d-1$, puncturing and shortening can create 
a new linear code $C^{(\ell)}\subset F^{2(n-\ell)}_q$ with $(\dim C^{(\ell)}-\dim C^{(\ell)}\cap C^{(\ell)\perp s})/2=\ell$, $\ell +(n-\ell)-\dim C^{(\ell)}=k$ 
(therefore $\dim C^{(\ell)}=\dim C$), $d\leq \min \{w(\vec{x})\mid\vec{x}\in C^{(\ell)\perp s}\setminus \{\vec{0}\} \}$.
\end{Thm}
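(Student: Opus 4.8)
The plan is to realize $C^{(\ell)}$ by \emph{puncturing} $C$ at an arbitrary fixed set $S=\{i_1,\dots,i_\ell\}\subseteq\{1,\dots,n\}$ of $\ell$ indices (deleting the $i_j$th and $(n+i_j)$th components), and to track its symplectic dual via the companion \emph{shortening} of $C^{\perp s}$ at $S$. Write $\pi_S\colon F_q^{2n}\to F_q^{2(n-\ell)}$ for the puncturing map and $\sigma_S$ for the shortening map, and set $C^{(\ell)}:=\pi_S(C)$. The first step is a symplectic puncture/shorten duality lemma: $(\pi_S(C))^{\perp s}=\sigma_S(C^{\perp s})$. The inclusion $\supseteq$ is immediate, since a vector that is zero on the deleted coordinate pairs pairs trivially under $\langle\cdot,\cdot\rangle_s$ with anything supported there, so deleting those pairs does not affect the symplectic product. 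Equality then follows by a dimension count, which I do next.

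The dimension bookkeeping is where the hypothesis $1\le\ell\le d-1$ enters, and it is the heart of the proof. Because $C\subseteq C^{\perp s}$, every nonzero vector of $C$ has symplectic weight $\ge d>\ell$, so $C$ contains no nonzero vector supported inside the $\ell$ deleted pairs; hence $\pi_S$ is injective on $C$ and $\dim C^{(\ell)}=\dim C$. For the shortening dimensions I use non-degeneracy of the symplectic form: a linear functional on $F_q^{2\ell}$ annihilating the image of the restriction map $C^{\perp s}\to F_q^{2\ell}$ corresponds, by non-degeneracy, to a unique $\vec{y}\in F_q^{2n}$ supported on the $S$-coordinates with $\vec{y}\in(C^{\perp s})^{\perp s}=C$; the weight bound again forces $\vec{y}=\vec{0}$, so that restriction map is surjective and $\dim\sigma_S(C^{\perp s})=\dim C^{\perp s}-2\ell=2(n-\ell)-\dim C=\dim(\pi_S(C))^{\perp s}$, proving the duality lemma. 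The same computation with $C$ and $C^{\perp s}$ interchanged gives $\dim\sigma_S(C)=\dim C-2\ell$, which incidentally shows $\dim C\ge 2\ell$, so this quantity is nonnegative.

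Next I identify the stabilizer part of the new code as a shortening of $C$: $C^{(\ell)}\cap C^{(\ell)\perp s}=\sigma_S(C)$. The inclusion $\supseteq$ holds because $\sigma_S(C)\subseteq\pi_S(C)$ and $\sigma_S(C)\subseteq\sigma_S(C^{\perp s})=C^{(\ell)\perp s}$. For $\subseteq$, take $\vec{z}$ in the intersection; lift it to $\tilde{z}\in C^{\perp s}$ vanishing on the $S$-coordinates (from the dual side, via the duality lemma) and to $\tilde{z}'\in C$ (from $\pi_S(C)$). Then $\tilde{z}-\tilde{z}'\in C^{\perp s}$ since $C\subseteq C^{\perp s}$, and it is supported on the deleted pairs, so its symplectic weight is $\le\ell<d$; hence $\tilde{z}-\tilde{z}'=\vec{0}$, so $\tilde{z}'=\tilde{z}\in C$ vanishes on $S$, i.e.\ $\vec{z}\in\sigma_S(C)$.

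The parameters then follow formally. From $\dim C^{(\ell)}=\dim C$ and $\dim\bigl(C^{(\ell)}\cap C^{(\ell)\perp s}\bigr)=\dim\sigma_S(C)=\dim C-2\ell$ we get $(\dim C^{(\ell)}-\dim C^{(\ell)}\cap C^{(\ell)\perp s})/2=\ell$ and $\ell+(n-\ell)-\dim C^{(\ell)}=n-\dim C=k$. For the distance, $C^{(\ell)\perp s}=\sigma_S(C^{\perp s})$, and any nonzero vector of a shortened code lifts to a nonzero vector of the original code of the same symplectic weight, so $\min\{w(\vec{x})\mid\vec{x}\in C^{(\ell)\perp s}\setminus\{\vec{0}\}\}\ge\min\{w(\vec{x})\mid\vec{x}\in C^{\perp s}\setminus\{\vec{0}\}\}=d$. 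I expect the main obstacle to be organizing the several interlocking dimension counts cleanly and making the non-degeneracy argument for the shortening dimensions fully rigorous; once the duality lemma and the identity $C^{(\ell)}\cap C^{(\ell)\perp s}=\sigma_S(C)$ are established, the rest is arithmetic. It is worth isolating these two facts as the lemmas referred to in the introduction as ``puncturing and shortening tailored for symplectic-self-orthogonal spaces.''
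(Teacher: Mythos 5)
Your proposal is correct, and its overall skeleton is the same as the paper's: realize $C^{(\ell)}$ as $C$ punctured on $\ell$ coordinate pairs, identify $(C^{(\ell)})^{\perp s}$ with the correspondingly shortened $C^{\perp s}$ via an easy inclusion plus a dimension count, identify $C^{(\ell)}\cap C^{(\ell)\perp s}$ with the shortened $C$, and observe that shortening cannot decrease the minimum symplectic weight. The differences are in how the sub-steps are executed, and they are worth noting. First, you delete all $\ell$ pairs at once, whereas the paper punctures and shortens one pair at a time and iterates (Lemma \ref{Lem1.1} $\to$ Corollary \ref{Cor1.1.1}, Lemma \ref{Lem1.4} $\to$ Corollary \ref{Cor1.4.1}). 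Second, for the fact that shortening $C^{\perp s}$ drops its dimension by exactly $2\ell$, you argue via non-degeneracy of the symplectic form: a functional annihilating the image of the restriction map $C^{\perp s}\to F_q^{2\ell}$ would produce a vector of $C=(C^{\perp s})^{\perp s}$ of symplectic weight $\le \ell<d$. The paper proves the same fact (Lemmas \ref{Lem1.2} and \ref{Lem1.3}) by an explicit row reduction showing the $i$th and $(n+i)$th columns of a generator matrix of $C^{\perp s}$ are linearly independent; these are the same statement in two languages, but your formulation generalizes painlessly to $\ell$ pairs and makes the role of the hypothesis $\ell\le d-1$ transparent. Third, and most substantively, you prove the identity $C^{(\ell)}\cap C^{(\ell)\perp s}=\sigma_S(C)$ directly by the lifting argument (two lifts of the same vector differ by an element of $C^{\perp s}$ supported on the deleted pairs, hence of weight $\le\ell<d$, hence zero), whereas the paper's proof of Theorem \ref{Thm1} cites Section 4.1 of Galindo et~al.\ for exactly this step. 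Your write-up is therefore self-contained at the one point where the paper defers to an external reference, at the cost of a somewhat longer argument.
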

\begin{remark}\label{remrem}
  Galindo et~al.\ \cite[Proposition 5]{galindo19correction}
  assumed $2\ell$ is less than the minimum Hamming weight of $C^{\perp s}$
  regarded as an ordinary linear code of length $2n$,
  while the conclusion was the same as Theorem \ref{Thm1}.
  Galindo et~al.' assumption implies our assumption $\ell < d$,
  and Theorem \ref{Thm1} has wider applicability.
\end{remark}

\begin{Example} \label{Exm1}
  The following set is a basis of a $[5,1,3;0]_{2}$ stabilizer QECC $A$
  \[  \begin{Bmatrix}
        (10010|01100), \\
        (01001|00110), \\
        (10100|00011), \\
        (01010|10001)
      \end{Bmatrix}
  .\]
  Then the following set is a basis of its symplectic dual code $A^{\perp s}$
  \[  \begin{Bmatrix}
        (10010|01100), \\
        (01001|00110), \\
        (10100|00011), \\
        (01010|10001), \\
        (00001|10010), \\
        (00000|11111)
      \end{Bmatrix}
  . \]
  When we puncture $A$ at the 3rd and the 8th components, a basis of the punctured code $A^{(1)}$ is as follows
  \[  \begin{Bmatrix}
        (1010|0100), \\
        (0101|0010), \\
        (1000|0011), \\
        (0110|1001)
      \end{Bmatrix}
  . \]
  When we shorten $A^{\perp s}$ at the 3rd and the 8th components, a basis of the shortened code $A^{(1)\perp s}$ is as follows
  \[  \begin{Bmatrix}
        (1010|1011), \\
        (0101|1101), \\
        (0110|1001), \\
        (0001|1010) 
      \end{Bmatrix}
  . \]
  $A^{(1)}$ is a $[4,1,3;1]_{2}$ stabilizer EAQECC and $(A^{\perp s})^{(1)}$ is a symplectic dual code of $A^{(1)}$.
\end{Example}
In order to prove Theorem \ref{Thm1}, we prove the following four lemmas.
\begin{Lem}\label{Lem1.1}
  Let $C_{(p)}\subset F^{2(n-1)}_{q}$ be a linear code made from a linear code $C\subset F^{2n}_{q}$ 
  with the minimum symplectic weight $w_s\geq 2$, by puncturing $C$ once. Then we have $\dim C=\dim C_{(p)}$.
\end{Lem}
\begin{proof}
  Let $\dim C=k'$.
  Suppose that the number of basis vectors in $C_{(p)}$ is less than $C$.
  
  Let $\{\vec{e}_{1},\vec{e}_{2},\dots,\vec{e}_{k'}\}$ be a basis of $C$, and 
  $\{\vec{e'}_{1},\vec{e'}_{2},\dots,$\\$\vec{e'}_{k'}\}$ be vectors made by puncturing it once.
  When the number of basis vectors in $C_{(p)}$ is less than $C$, there are two cases, namely (1) a zero vector exists in $\{\vec{e'}_{1},\vec{e'}_{2},\dots,\vec{e'}_{k'}\}$ 
  and (2) $\{\vec{e'}_{1},\vec{e'}_{2},\dots,\vec{e'}_{k'}\}$ is linearly dependent.

  Case (1) happens when puncturing of $C$ eliminates a non-zero component of a vector with $w_s =1$ in a basis of $C$, 
  which contradicts to $w_s\geq 2$.

  Case (2) happens when there exists $\vec{e'}_{m}$ ($1\leq m\leq k'$) with
  \[ \vec{e'}_{m}=a_1\vec{e'}_1+\dotsb +a_{(m-1)}\vec{e'}_{(m-1)}+a_{(m+1)}\vec{e'}_{(m+1)} 
  \] \[ 
  +\dotsb +a_{k'}\vec{e'}_{k'} ,\]
  where $a_i\in F_q$ ($1\leq i\leq k'$).
  $C$ has a vector
  \[\vec{e}_{m}-(a_1\vec{e}_1+\dotsb +a_{(m-1)}\vec{e}_{(m-1)}+a_{(m+1)}\vec{e}_{(m+1)}\]
  \[+ \dotsb +a_{k'}\vec{e}_{k'}) \]
  of symplectic weight 1, so the assumption $w_s\geq 2$ does not hold.
\end{proof}
From Lemma \ref{Lem1.1} we have, 
\begin{Cor}\label{Cor1.1.1}
For a linear code $C\subset F^{2n}_{q}$ with the minimum symplectic weight $w_s\geq 2$ and for an integer $\ell$ ($1\leq \ell\leq w_s-1$),
let $C^{(\ell)}_{(p)}\subset F^{2(n-\ell)}_{q}$ be a linear code made from $C$ 
by puncturing it $\ell$ times, then we have $\dim C^{(\ell)}_{(p)}=\dim C$. \QED
\end{Cor}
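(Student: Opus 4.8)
The plan is to prove this by induction on $\ell$, with Lemma \ref{Lem1.1} supplying the inductive step. The key auxiliary observation is that a single puncturing reduces the minimum symplectic weight by at most $1$: eliminating the $i$th and $(n+i)$th components lowers the symplectic weight of each individual vector by at most $1$, and no nonzero vector of a code with minimum symplectic weight $w_s \geq 2$ can be sent to $\vec{0}$ by one puncturing, since that would require a vector of symplectic weight $1$. Consequently, after $j$ punctures the resulting code has minimum symplectic weight at least $w_s - j$.

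First I would fix a sequence of codes $C = C^{(0)}_{(p)}, C^{(1)}_{(p)}, \dots, C^{(\ell)}_{(p)}$, where each $C^{(j+1)}_{(p)}$ is obtained from $C^{(j)}_{(p)}$ by puncturing once. Next I would check that for every $j$ with $0 \le j \le \ell - 1$ the code $C^{(j)}_{(p)}$ has minimum symplectic weight at least $w_s - j \ge w_s - (\ell - 1) \ge 2$, where the last inequality uses the hypothesis $\ell \le w_s - 1$. This permits applying Lemma \ref{Lem1.1} at each of these $\ell$ stages, yielding $\dim C^{(j)}_{(p)} = \dim C^{(j+1)}_{(p)}$; chaining the equalities gives $\dim C^{(\ell)}_{(p)} = \dim C$.

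The only point needing a little care — and the main, rather mild, obstacle — is the bookkeeping on the minimum symplectic weight: one must ensure it remains at least $2$ at every stage where Lemma \ref{Lem1.1} is invoked, which is precisely what $\ell \le w_s - 1$ guarantees (the weight is allowed to drop to $1$ only at the final stage $j = \ell$, where the lemma is no longer needed). I would also remark that the statement is insensitive to which coordinates are punctured and in what order, since Lemma \ref{Lem1.1} imposes no such restriction, so the induction goes through for any choice.
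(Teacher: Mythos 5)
Your proof is correct and follows the same route the paper intends: the paper states this corollary as an immediate iteration of Lemma \ref{Lem1.1} with no written argument, and your induction on $\ell$ is exactly that iteration made explicit. The one detail you supply that the paper leaves tacit --- that each puncture lowers the minimum symplectic weight by at most $1$, so the hypothesis $\ell \le w_s-1$ keeps the weight at least $2$ at every stage where Lemma \ref{Lem1.1} is invoked --- is verified correctly and is precisely the bookkeeping needed.
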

\begin{Example}
  According to Example \ref{Exm1},
  minimum symplectic weight $w_s$ of $A$ is $3$. 
  We have $\dim A^{(1)}_{(p)}=\dim A=4$.
\end{Example}
For a linear code $C\subset F^{2n}_{q}$ with the minimum symplectic weight $w_s\geq 2$, let $C^{\perp s}$ 
be its symplectic dual code. 
Let $\dim C^{\perp s}=k'^{\perp s}$, $\{\vec{e}_{1},\vec{e}_{2},\dots,\vec{e}_{k'^{\perp s}}\}$ be 
a basis of $C^{\perp s}$, where $\vec{e}_{1},\vec{e}_{2},\dots,\vec{e}_{k'^{\perp s}}$ are row vectors.
Define a matrix $M$ by
\[ M= 
\left( \begin{array}{@{\,}c@{\,}} 
  \vec{e}_{1} \\
  \vdots \\ 
  \vec{e}_{k'^{\perp s}} 
\end{array} \right)
= 
\left( \begin{array}{@{\,}ccc@{\,}} 
  e_{1(1)} & \cdots & e_{1(2n)} \\
    \vdots & \ddots & \vdots \\ 
  e_{k'^{\perp s}(1)} & \cdots & e_{k'^{\perp s}(2n)}
\end{array} \right)
. \]
Then we have the following lemma.
\begin{Lem}\label{Lem1.2}
  If a column in the matrix $M$ is the zero vector or 
  there is an index $i(1\leq i \leq n)$ such that the $i$th column in $M$ is a scalar multiple of the ($n+i$)th column in $M$, 
  then the minimum symplectic weight of $C$ is $1$. \QED
\end{Lem}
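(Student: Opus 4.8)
The plan is to work from the fact that $M$ is a generator matrix of $C^{\perp s}$, combined with the standard fact that the symplectic form on $F^{2n}_q$ is non-degenerate, so that $(C^{\perp s})^{\perp s}=C$ \cite{grove02}. Consequently a vector $\vec{v}\in F^{2n}_q$ lies in $C$ if and only if $\langle\vec{v},\vec{e}_m\rangle_s=0$ for every row $\vec{e}_m$ of $M$. So it is enough, under either hypothesis, to exhibit one nonzero vector $\vec{v}$ whose only nonzero coordinates occur in a single pair $(i,n+i)$ of positions (hence $w(\vec{v})=1$) and to check $\langle\vec{v},\vec{e}_m\rangle_s=0$ for all $m$; since such a $\vec{v}$ is nonzero, the minimum symplectic weight of $C$ is then exactly $1$. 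Throughout, write $\vec{u}_i\in F^n_q$ for the vector with a $1$ in position $i$ and $0$ elsewhere, write $\vec{e}_m=(\vec{c}^{(m)}|\vec{d}^{(m)})$, and recall that the entries of the $j$th column of $M$ are the scalars $c^{(m)}_j$ when $1\le j\le n$ and the scalars $d^{(m)}_{j-n}$ when $n<j\le 2n$.

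For the first hypothesis, suppose some column of $M$ is the zero vector. If it is the $j$th column with $1\le j\le n$, take $\vec{v}=(\vec{0}\,|\,\vec{u}_j)$; then $\langle\vec{v},\vec{e}_m\rangle_s=-\langle\vec{u}_j,\vec{c}^{(m)}\rangle_E=-c^{(m)}_j=0$ for all $m$, so $\vec{v}\in C$ and $w(\vec{v})=1$. If the zero column is the $(n+i)$th with $1\le i\le n$, take $\vec{v}=(\vec{u}_i\,|\,\vec{0})$; then $\langle\vec{v},\vec{e}_m\rangle_s=\langle\vec{u}_i,\vec{d}^{(m)}\rangle_E=d^{(m)}_i=0$ for all $m$, so again $\vec{v}\in C$ and $w(\vec{v})=1$.

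For the second hypothesis, suppose there are $i$ with $1\le i\le n$ and $\lambda\in F_q$ with $c^{(m)}_i=\lambda d^{(m)}_i$ for every $m$ (i.e.\ the $i$th column of $M$ is $\lambda$ times the $(n+i)$th). Take $\vec{v}=(\lambda\vec{u}_i\,|\,\vec{u}_i)$. Then $\langle\vec{v},\vec{e}_m\rangle_s=\langle\lambda\vec{u}_i,\vec{d}^{(m)}\rangle_E-\langle\vec{u}_i,\vec{c}^{(m)}\rangle_E=\lambda d^{(m)}_i-c^{(m)}_i=0$ for all $m$, so $\vec{v}\in C$; and the $(n+i)$th coordinate of $\vec{v}$ is $1$, so $(v_i,v_{n+i})\ne(0,0)$ and $w(\vec{v})=1$. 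In all cases $C$ contains a nonzero vector of symplectic weight $1$, which proves the lemma. I do not anticipate a real obstacle: the only care needed is the bookkeeping that matches a column index of $M$ to the correct half of the symplectic coordinates and keeps the sign in $\langle\cdot,\cdot\rangle_s$ correct, plus the (easy) remark that the witnessed vector is nonzero, so that the minimum symplectic weight falls to $1$ and not to $0$; the sole external input is the involution $(C^{\perp s})^{\perp s}=C$.
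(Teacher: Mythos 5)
Your proof is correct. The paper actually gives no proof of Lemma \ref{Lem1.2} at all (the statement is simply followed by a QED mark), so your argument supplies the missing details in what is surely the intended way: since the symplectic form is non-degenerate, $(C^{\perp s})^{\perp s}=C$, and each hypothesis lets you exhibit an explicit nonzero vector supported on the single coordinate pair $(i,n+i)$ that is symplectically orthogonal to every row of the generator matrix $M$ of $C^{\perp s}$, hence lies in $C$ and has symplectic weight exactly $1$. Your sign bookkeeping in $\langle\cdot,\cdot\rangle_s$ and the choice $\vec{v}=(\lambda\vec{u}_i\,|\,\vec{u}_i)$ in the scalar-multiple case both check out, and the remark that weight-$0$ vectors are zero correctly pins the minimum at $1$ rather than below.
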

  By taking the contraposition of Lemma \ref{Lem1.2} we have, 
\begin{Cor}\label{Cor1.2.1}
  If the minimum symplectic weight of $C$ is $2$ or larger, then there is neither column vector in the matrix $M$ whose all components are zero, nor an index
  $i$ $(1\leq i \leq n)$ such that the $i$th column vector in $M$ is a scalar multiple of the ($n+i$)th column vector in $M$. \QED
\end{Cor}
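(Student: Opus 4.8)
The first observation is that Corollary~\ref{Cor1.2.1} is exactly the contrapositive of Lemma~\ref{Lem1.2}: its hypothesis ``the minimum symplectic weight of $C$ is $2$ or larger'' is the negation of ``the minimum symplectic weight of $C$ is $1$'', and its conclusion is the negation of ``some column of $M$ is the zero vector, or for some index $i$ the $i$th column is a scalar multiple of the $(n+i)$th column''. So no work is needed beyond Lemma~\ref{Lem1.2} itself, and the plan is to prove that lemma and then pass to the contrapositive.

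For Lemma~\ref{Lem1.2} the one genuine ingredient is that the symplectic form on $F^{2n}_q$ is non-degenerate, so $(C^{\perp s})^{\perp s}=C$; equivalently, $(\vec{a}|\vec{b})\in C$ if and only if $\langle(\vec{a}|\vec{b}),\vec{e}_j\rangle_s=\sum_{i=1}^{n}a_i e_{j(n+i)}-\sum_{i=1}^{n}b_i e_{j(i)}=0$ for every row $\vec{e}_j$ of $M$. Hence, under the hypothesis of the lemma, it suffices to exhibit a single nonzero vector $(\vec{a}|\vec{b})$ of symplectic weight $1$ satisfying all of these linear equations. If the $i$th column of $M$ (with $1\le i\le n$) is the zero vector, take $\vec{a}=\vec{0}$ and $\vec{b}$ equal to the $i$th standard basis vector, so each equation reduces to $-e_{j(i)}=0$. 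If the $(n+i)$th column is the zero vector, take $\vec{b}=\vec{0}$ and $\vec{a}$ the $i$th standard basis vector. If $e_{j(i)}=\lambda e_{j(n+i)}$ for all $j$, take $a_i=\lambda$, $b_i=1$, and all remaining coordinates $0$, so each equation becomes $\lambda e_{j(n+i)}-\lambda e_{j(n+i)}=0$. In every case the chosen vector lies in $C$ and has symplectic weight exactly $1$, so the minimum symplectic weight of $C$ is $1$; this is Lemma~\ref{Lem1.2}, and Corollary~\ref{Cor1.2.1} follows at once.

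I do not expect a real obstacle here. The only point needing a moment's care is the third case, where one should note that $b_i=1$ forces the constructed vector to be nonzero of weight exactly $1$ even when $\lambda=0$ (and that this degenerate subcase, $e_{j(i)}=0$ for all $j$, overlaps with the zero-column alternative). One should also keep the direction of proportionality in mind: ``the $i$th column is a scalar multiple of the $(n+i)$th column'' does not by itself cover a nonzero $i$th column together with a zero $(n+i)$th column, but that configuration is precisely the zero-column alternative already listed, so the statement loses nothing.
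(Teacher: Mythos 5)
Your proposal is correct and matches the paper exactly: the paper likewise obtains Corollary~\ref{Cor1.2.1} purely as the contrapositive of Lemma~\ref{Lem1.2}, with no further argument. Your explicit construction of a symplectic-weight-$1$ vector in $C=(C^{\perp s})^{\perp s}$ orthogonal to every row of $M$ is moreover a correct proof of Lemma~\ref{Lem1.2} itself, which the paper states without proof, and your case analysis (including the $\lambda=0$ subcase and the direction of the scalar-multiple relation) is sound.
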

\begin{Lem}\label{Lem1.3}
  For a linear code $C\subset F^{2n}_{q}$ with the minimum symplectic weight $w_s\geq 2$, let $C^{\perp s}$ 
  be its symplectic dual code, then shortening of $C^{\perp s}$ reduces the dimension of $C^{\perp s}$ by $2$.
\end{Lem}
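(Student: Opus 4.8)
The plan is to show that shortening $C^{\perp s}$ at the $i$th and $(n+i)$th coordinates removes exactly two dimensions, by analysing the evaluation map that reads off those two coordinates. Concretely, consider the linear map $\varphi\colon C^{\perp s}\to F_q^2$ sending a codeword $(\vec a|\vec b)$ to the pair $(a_i,b_{n+i})$ of its $i$th and $(n+i)$th components. The shortened code $C^{(\ell)\perp s}$ (for $\ell=1$; the general case follows by iteration as in Corollary \ref{Cor1.1.1}) is, by definition, the image under coordinate-deletion of $\ker\varphi$, and deleting two all-zero coordinates does not change the dimension, so $\dim C^{(1)\perp s}=\dim\ker\varphi=\dim C^{\perp s}-\operatorname{rank}\varphi$. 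Hence it suffices to prove $\operatorname{rank}\varphi=2$, i.e.\ that $\varphi$ is surjective.

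For surjectivity I would argue that neither component map can be trivial and that they are not proportional. The key input is Corollary \ref{Cor1.2.1}: since $C$ has minimum symplectic weight $w_s\ge 2$, the $i$th column of the generator matrix $M$ of $C^{\perp s}$ is nonzero, the $(n+i)$th column of $M$ is nonzero, and the $i$th column is not a scalar multiple of the $(n+i)$th column. Reading $\varphi$ through $M$, its image is exactly the column span in $F_q^2$ of the two-row matrix whose rows are (the transposes of) the $i$th and $(n+i)$th columns of $M$; equivalently, $\operatorname{rank}\varphi$ equals the rank of the $2\times k'^{\perp s}$ submatrix of $M^{\mathsf T}$ formed by those two columns. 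That this rank is $2$ is precisely the statement that the two columns are linearly independent in $F_q^{k'^{\perp s}}$, which is what Corollary \ref{Cor1.2.1} gives: a nonzero column that is not a scalar multiple of the other, together with the other column being nonzero, forces independence. Therefore $\operatorname{rank}\varphi=2$ and $\dim C^{(1)\perp s}=\dim C^{\perp s}-2$.

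I expect the only real subtlety to be bookkeeping about which ambient space and which inner product we are in: shortening as defined in Section 2 operates on $C^{\perp s}\subset F_q^{2n}$ directly as an ordinary linear code, so the dimension count is a purely linear-algebraic fact about the coordinate functionals $(\vec a|\vec b)\mapsto a_i$ and $(\vec a|\vec b)\mapsto b_{n+i}$, and does not itself require re-deriving the symplectic structure — the symplectic hypothesis enters only through $w_s(C)\ge 2$ feeding Corollary \ref{Cor1.2.1}. One should also be slightly careful that ``scalar multiple'' in Corollary \ref{Cor1.2.1} includes the possibility of the zero scalar only via the separate clause ruling out zero columns; both clauses are needed to conclude genuine $2$-dimensional independence when $q=2$ as well as for larger $q$. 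With those points noted, the proof is short: build $\varphi$, invoke Corollary \ref{Cor1.2.1} to get $\operatorname{rank}\varphi=2$, and conclude $\dim(\text{shortening of }C^{\perp s})=\dim C^{\perp s}-2$.
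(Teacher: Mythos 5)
Your proof is correct and is essentially the paper's own argument in different clothing: the paper row-reduces $M$ so that the $i$th and $(n+i)$th columns become $(1,0,\dots,0)^{\mathsf T}$ and $(0,1,0,\dots,0)^{\mathsf T}$ --- which is possible exactly because those two columns are linearly independent, the same consequence of Corollary \ref{Cor1.2.1} you extract --- and then notes that the remaining rows form a basis of the shortened code, which is the same dimension count as your $\dim\ker\varphi=\dim C^{\perp s}-\operatorname{rank}\varphi$ with $\operatorname{rank}\varphi=2$. (Only a trivial indexing slip: the pair read off should be $(a_i,b_i)$, i.e.\ the $i$th and $(n+i)$th components of the concatenated vector $(\vec a|\vec b)$, not $b_{n+i}$.)
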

\begin{proof}
  From Corollary \ref{Cor1.2.1} and the minimum symplectic weight of $C$ being $\geq 2$, the matrix $M$ can be transformed by elementary row operations as follows 
  \[ M=
    \left( \begin{array}{@{\,}ccccccc@{\,}} 
      e'_{1(1)} & \cdots & 1 & \cdots & 0 & \cdots & e'_{1(2n)}\\
      e'_{2(1)} & \cdots & 0 & \cdots & 1 & \cdots & e'_{2(2n)}\\ 
      e'_{3(1)} & \cdots & 0 & \cdots & 0 & \cdots & e'_{3(2n)}\\ 
        \vdots  & \      & \vdots & \ & \vdots & \ & \vdots    \\
      e'_{k'^{\perp s}(1)} & \cdots & 0 & \cdots & 0 & \cdots & e'_{k'^{\perp s}(2n)}
    \end{array} \right)
  . \]
  All row vectors after the third row in $M$ is linearly independent, so when shortening the $i$th and the ($n+i$)th columns of $C^{\perp s}$, 
  a new linear code has ($\dim C^{\perp s}-2$) row vectors as its basis.
\end{proof}
\begin{Lem}\label{Lem1.4}
  For a linear code $C\subset F^{2n}_{q}$ with the minimum symplectic weight $w_s\geq 2$, 
  let $C_{(p)}\subset F^{2(n-1)}_{q}$ be a linear code made from $C$ by puncturing $C$ once, 
  and $C^{\perp s}_{(s)}\subset F^{2(n-1)}_{q}$ be a linear code made from $C^{\perp s}$ by shortening $C^{\perp s}$ once,
  $C^{\perp s}_{(s)}$ is the symplectic dual code of $C_{(p)}$.
\end{Lem}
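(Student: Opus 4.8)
The plan is to establish the equality $C^{\perp s}_{(s)} = (C_{(p)})^{\perp s}$ by proving one inclusion directly and then matching dimensions, so that equal finite-dimensional spaces with one contained in the other must coincide.

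First I would fix notation: let $i$ be the coordinate at which both operations act, so $C_{(p)}$ deletes the $i$th and $(n+i)$th entries of every vector of $C$, while $C^{\perp s}_{(s)}$ keeps only those $(\vec{a}|\vec{b})\in C^{\perp s}$ with $a_i=b_i=0$ and then deletes the $i$th and $(n+i)$th entries. Write $\pi$ for this coordinate-deletion map $F_q^{2n}\to F_q^{2(n-1)}$. The key algebraic observation is that for $(\vec{a}|\vec{b}),(\vec{c}|\vec{d})\in F_q^{2n}$, the value $\langle\pi(\vec{a}|\vec{b}),\pi(\vec{c}|\vec{d})\rangle_s$ differs from $\langle(\vec{a}|\vec{b}),(\vec{c}|\vec{d})\rangle_s$ exactly by the discarded terms $a_id_i-b_ic_i$. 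Hence, if $a_i=b_i=0$, the two symplectic inner products are equal. Applying this with $(\vec{a}|\vec{b})$ running over the vectors of $C^{\perp s}$ that vanish on $\{i,n+i\}$ and $(\vec{c}|\vec{d})$ running over $C$ (so $\pi(\vec{c}|\vec{d})$ runs over all of $C_{(p)}$) shows that every element of $C^{\perp s}_{(s)}$ is symplectically orthogonal to every element of $C_{(p)}$; that is, $C^{\perp s}_{(s)}\subseteq(C_{(p)})^{\perp s}$.

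Next I would compare dimensions. Since the symplectic form on $F_q^{2m}$ is non-degenerate, $\dim V+\dim V^{\perp s}=2m$ for any linear code $V$ of length $2m$; in particular $\dim(C_{(p)})^{\perp s}=2(n-1)-\dim C_{(p)}$. By Lemma \ref{Lem1.1}, which uses the hypothesis $w_s\geq 2$ for $C$, we have $\dim C_{(p)}=\dim C$, so $\dim(C_{(p)})^{\perp s}=2n-2-\dim C$. On the other side, $\dim C^{\perp s}=2n-\dim C$, and Lemma \ref{Lem1.3} (again invoking $w_s\geq 2$) gives $\dim C^{\perp s}_{(s)}=\dim C^{\perp s}-2=2n-2-\dim C$. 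Thus the inclusion of the previous paragraph is between spaces of the same dimension and is therefore an equality, proving that $C^{\perp s}_{(s)}$ is the symplectic dual of $C_{(p)}$.

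I do not expect a serious obstacle; the only points requiring care are bookkeeping ones: ensuring the \emph{same} index $i$ is used in both the puncturing and the shortening, verifying that the symplectic inner product genuinely decomposes as the truncated sum plus the single correction term $a_id_i-b_ic_i$, and confirming that the hypotheses of Lemmas \ref{Lem1.1} and \ref{Lem1.3} — minimum symplectic weight of $C$ at least $2$ — are precisely what is assumed here, so both lemmas apply to $C$ and to $C^{\perp s}$ as used. Everything else is elementary linear algebra over $F_q$.
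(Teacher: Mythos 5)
Your proposal is correct and follows essentially the same route as the paper: establish that every vector of $C^{\perp s}_{(s)}$ is symplectically orthogonal to every vector of $C_{(p)}$ (the paper does this via auxiliary sets $G$ and $H$, you via the coordinatewise decomposition of the symplectic form), then conclude by the dimension count $\dim C_{(p)}+\dim C^{\perp s}_{(s)}=\dim C+(\dim C^{\perp s}-2)=2(n-1)$ using Lemmas \ref{Lem1.1} and \ref{Lem1.3}. If anything, your statement of the orthogonality step as the inclusion $C^{\perp s}_{(s)}\subseteq(C_{(p)})^{\perp s}$ is cleaner than the paper's wording.
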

\begin{Example}
  According to Example \ref{Exm1}, 
  $(A^{\perp s})^{(1)}$ is a symplectic dual code of $A^{(1)}$.
\end{Example}
\begin{proof}
  Let $C^{\perp s}\subset F^{2n}_q$ be the symplectic dual code of a linear code $C\subset F^{2n}_q$.
  Let $H$ be the set of vectors in $C^{\perp s}$ whose $i$th and ($n+i$)th components $(1\leq i \leq n)$ of are $0$.
  Then, we have $C \subset H^{\perp s}$. 
  Also, let $G$ be the set of vectors in $C$ whose $i$th and ($n+i$)th components of are changed to $0$.
  Here, we have $G\subset H^{\perp s}$ because the $i$th and the ($n+i$)th components of vectors 
  in $C$ are multiplied by components of vectors in $H$ which are $0$.

  On the other hand, we have $C_{(p)}\subset C^{\perp s}_{(s)}$ because
  $C_{(p)}$ is the set of vectors in $G$ whose $i$th and 
  ($n+i$)th components of are eliminated, and 
  $C^{\perp s}_{(s)}$ is the set of vectors in $H$ which the $i$th and 
  the ($n+i$)th components of are eliminated.

  From Corollary \ref{Cor1.1.1} and Lemma \ref{Lem1.3}, we have
  \[\dim C_{(p)}+\dim C^{\perp s}_{(s)}=\dim C+(\dim C^{\perp s}-2)=2(n-1) .\]
  From the above and $C_{(p)}\subset C^{\perp s}_{(s)}$, $C^{\perp s}_{(s)}$ is the symplectic dual code of $C_{(p)}$.
\end{proof}
From Lemma \ref{Lem1.3} and \ref{Lem1.4} we have,
\begin{Cor}\label{Cor1.4.1}
  For a linear code $C\subset F^{2n}_{q}$ with the minimum symplectic weight $w_s\geq 2$ and $\ell$ ($1\leq \ell\leq w_s-1$),
  let $C^{(\ell)}_{(p)}\subset F^{2(n-\ell)}_{q}$ be a linear code made from $C$ by puncturing it $\ell$ times, and 
  let $(C^{\perp s})^{(\ell)}_{(s)}\subset F^{2(n-\ell)}_{q}$ be a linear code made from $C^{\perp s}$ by shortening it $\ell$ times, 
  $(C^{\perp s})^{(\ell)}_{(s)}$ is the symplectic dual code of $C^{(\ell)}_{(p)}$. $\square$
\end{Cor}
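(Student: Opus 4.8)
The plan is to prove Corollary~\ref{Cor1.4.1} by induction on $\ell$, with Lemma~\ref{Lem1.4} doing the work at each step. The base case $\ell=1$ is exactly Lemma~\ref{Lem1.4}. Before running the induction I would record two elementary bookkeeping facts about the operations of Section~2: puncturing a code first at the coordinate pairs indexed by $i_1,\dots,i_{\ell-1}$ and then at $i_\ell$ yields the same code as puncturing at $i_1,\dots,i_\ell$ simultaneously, and the same holds for shortening (here one checks that a vector surviving the first $\ell-1$ shortenings and then the $\ell$th is precisely a vector of $C^{\perp s}$ vanishing on all $\ell$ chosen coordinate pairs, and that the order of deletions is irrelevant). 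Consequently $C^{(\ell)}_{(p)}$ is obtained from $C^{(\ell-1)}_{(p)}$ by a single puncture and $(C^{\perp s})^{(\ell)}_{(s)}$ is obtained from $(C^{\perp s})^{(\ell-1)}_{(s)}$ by a single shortening, both at the $\ell$th coordinate pair.

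Next I would control the minimum symplectic weight along the iteration. Deleting one coordinate pair lowers the symplectic weight of any vector by at most $1$, and by Corollary~\ref{Cor1.1.1} the puncturing map $C\to C^{(j)}_{(p)}$ is a linear bijection for $1\le j\le w_s-1$; combining these, the minimum symplectic weight of $C^{(j)}_{(p)}$ is at least $w_s-j$. In particular $C^{(\ell-1)}_{(p)}$ has minimum symplectic weight at least $w_s-(\ell-1)\ge 2$ under the hypothesis $\ell\le w_s-1$, so Lemma~\ref{Lem1.4} (together with the dimension count of Lemma~\ref{Lem1.3} that it relies on) is applicable to it.

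The inductive step then reads as follows: by the induction hypothesis $(C^{\perp s})^{(\ell-1)}_{(s)}$ is the symplectic dual of $C^{(\ell-1)}_{(p)}$; applying Lemma~\ref{Lem1.4} to the pair $\bigl(C^{(\ell-1)}_{(p)},\,(C^{\perp s})^{(\ell-1)}_{(s)}\bigr)$ and puncturing / shortening once more at the $\ell$th coordinate pair, the resulting shortened code---equal to $(C^{\perp s})^{(\ell)}_{(s)}$ by the bookkeeping facts---is the symplectic dual of the resulting punctured code, which is $C^{(\ell)}_{(p)}$. This closes the induction.

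The step I expect to require the most care is not any single inference but the verification of the interface conditions: confirming that the hypothesis ``$w_s\ge 2$'' demanded by Lemma~\ref{Lem1.4} genuinely holds at every stage of the iteration---this is where the bound $\ell\le w_s-1$ is used sharply, since the final application needs $w_s-(\ell-1)\ge 2$---and pinning down that iterated single-coordinate shortening (resp.\ puncturing) coincides with multi-coordinate shortening (resp.\ puncturing). Once these are settled, everything else is the routine induction sketched above.
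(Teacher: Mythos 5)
Your proof is correct and follows the route the paper intends: the paper states this corollary without proof, as an immediate iteration of Lemmas \ref{Lem1.3} and \ref{Lem1.4}, and your induction on $\ell$ is precisely that iteration made explicit (together with the routine identification of iterated with simultaneous puncturing/shortening). The one substantive point you add---checking that each intermediate code $C^{(j)}_{(p)}$ retains minimum symplectic weight at least $w_s-j\ge 2$ so that Lemma \ref{Lem1.4} remains applicable at every stage---is exactly the detail the paper leaves implicit, and your argument for it is sound.
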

From the above lemmas, we can now prove Theorem 1.
\begin{proof}[Proof of Theorem 1]
  First, from Corollary \ref{Cor1.1.1}, we have 
  \[ \dim C^{(\ell)}_{(p)}=\dim C .\]

  According to Section 4.1 of the reference \cite{galindo19},
  For a linear code $C\subset F^{2n}_q$ with $C\subset C^{\perp s}$,
  \[ C^{(\ell)}_{(p)} \cap (C^{(\ell)}_{(p)})^{\perp s} = C^{(\ell)}_{(s)}\]
  is satisfied. Therefore, 
  \[ \dim C^{(\ell)}_{(p)} \cap (C^{(\ell)}_{(p)})^{\perp s} = \dim C^{(\ell)}_{(s)} = \dim C - 2\ell .\]
  So, from Corollary \ref{Cor1.4.1} we have 
  \[ \ell =c=(\dim C^{(\ell)}_{(p)}-\dim C^{(\ell)}_{(p)}\cap (C^{\perp s})^{(\ell)}_{(s)})/2 \]
  is satisfied.

  Second, we will prove 
  \[d\leq \min \{w(\vec{x})\mid\vec{x}\in (C^{\perp s})^{(\ell)}_{(s)}\setminus \{\vec{0}\}\} .\]
  In other words, we will prove that the minimum symplectic weight of $(C^{\perp s})^{(\ell)}_{(s)}\setminus C^{(\ell)}_{(p)}$ does not decrease from $d$.
  
  The minimum symplectic weight of $(C^{\perp s})^{(\ell)}_{(s)}$ does not decrease from $d$ because $(C^{\perp s})^{(\ell)}_{(s)}$ is created by
  shortening the subset of $(C^{\perp s})$ and shortening is to eliminate components of vectors in $(C^{\perp s})$ which are zero. 
\end{proof}

\section{Concluding remarks}
In this letter, we consider an extra assumption to \cite[Theorem 7]{grassl22}
that an original QECC is stabilizer-based. Our extra assumption
makes resulting EAQECC being also stabilizer-based, which
facilitates efficient encoding and decoding.
It is unclear to the authors
if stabilizer-based EAQECCs can be constructed from a class of QECCs
wider than the stabilizer codes.

\end{document}